
\documentclass[journal]{IEEEtran}
\usepackage{amsmath,amsthm}
\usepackage{amsfonts,amssymb}
\usepackage{algorithmic}
\usepackage{bm}
\usepackage{graphicx}
\usepackage{cite}
\usepackage{relsize}
\usepackage{bbm}
\usepackage{algorithm,algorithmic}
\usepackage{booktabs}

\newtheorem{theorem}{Theorem}
\usepackage{caption}
\usepackage{subcaption}
\ifCLASSINFOpdf
\else
\fi
\hyphenation{op-tical net-works semi-conduc-tor}

\begin{document}
%
\title{Robust Expectation-Maximization Algorithm for  DOA Estimation of Acoustic Sources in the Spherical Harmonic Domain}
%
%
%

\author{Hossein Lolaee,~\IEEEmembership{~Student~Member,~IEEE,} 
	{Mohammad~Ali~Akhaee*,~\IEEEmembership{~Member,~IEEE,}}
	
	\thanks{H. Lolaee and M.A. Akhaee are with the Department of Electrical and Computer Engineering, College of Eng., University of Tehran,, Tehran, Iran, 14588\-89694 (e-mail:  {hossein\_lolaee, akhaee}@ut.ac.ir)}
}
\maketitle

\begin{abstract}
The direction of arrival (DOA) estimation of sound sources has been a popular signal processing research topic due to its widespread applications. Using spherical microphone arrays (SMA),  DOA estimation can be applied in the spherical harmonic (SH) domain without any spatial 	ambiguity.
	However, the environment reverberation and noise can
	degrade the estimation performance.
 In this paper, 	we propose a new  expectation maximization (EM) algorithm  for deterministic maximum likelihood (ML)  DOA  estimation 
  of $L$ sound sources in the presence of spatially
 nonuniform noise in the SH domain.
 Furthermore a  new closed-form Cramer-Rao bound (CRB) for the deterministic ML DOA estimation  is derived for the signal model in the SH domain.
 The main idea of the proposed algorithm is
 considering the general model of the received signal in the
 SH domain, we reduce the complexity of the ML estimation by breaking it down into two steps: expectation and maximization steps.
 The proposed algorithm reduces the complexity from  
 $2L$-dimensional space   to $L$ $2$-dimensional space.
 Simulation results indicate that
 the proposed algorithm shows at least an improvement of 6dB
 in robustness in terms of root mean square error (RMSE). Moreover, the RMSE of the proposed algorithm 
 is very close to the CRB compared to the recent methods in  reverberant and noisy
 environments  in the large range of signal to noise ratio.

\end{abstract}
%

\begin{IEEEkeywords}
	direction of arrival estimation, spherical microphone array, spherical harmonics
\end{IEEEkeywords}

%
\IEEEpeerreviewmaketitle

\section{Introduction}
The direction of arrival (DOA)  estimation of sound sources has been a popular signal processing research topic due to its widespread applications, including speech enhancement and dereverberation, robot auditory, and spatial room acoustic analysis and synthesis.  Various algorithms and array structures have been proposed so far for different applications. Among different types of arrays 
including spherical, circular and linear, spherical arrays have attracted more attention recently. The spatial symmetry of spherical arrays helps us to capture the 3-D information of sound sources without spatial ambiguity.
Moreover, using spherical arrays the sound field can be analyzed by an orthonormal basis in the spherical harmonic (SH) domain.
The main advantage of analysis in the SH domain is the decoupling of frequency-dependent and angular-dependent components \cite{Teutsch_2008}. Better compression of spatial information, wide-band array beamforming, and linear analysis of array output signals  \cite{Teutsch} are the other advantages of  processing in the SH domain.


The traditional DOA estimation methods can be divided into three categories: time-delay, beamforming, and subspace based methods;  In the first category, the DOA is estimated using the time delay between the received signals in the microphone pairs of the array  \cite{Blandin_2012}. In the second category, the direction corresponding to the highest beamformer power is declared as the source direction \cite{Tung}. The third group is known by the famous multiple signal classification (MUSIC) algorithm \cite{Schmidt_1986}. Estimation of the signal parameters via rotational invariance techniques (ESPRIT) is another notable method within this category \cite{Roy_1989}. Various DOA estimation algorithms have been developed based on these three  categories in the SH domain\cite{Goossens_2009,Sun_2011,Li_2011,Kumar_2016,Epain_2012,Noohi_2013,Noohi_2015,Jin_2014,Tervo_2015,Moore_2016,Pan_2016,Haohai_Sun_2011}. 

The sound source reverberation causes producing correlated and coherent acoustic signals  which affects the performance of traditional methods specially spectral based ones \cite{Krim_1996}.
 Also, due to the rank reduction of  the spatial covariance matrix in the  reverberant environment, the  MUSIC and ESPRIT algorithms suffer from performance degradation. Although in \cite{Goossens_2009,Sun_2011,Li_2011,Kumar_2016}, MUSIC and ESPRIT are applied in the SH domain,  they lost accuracy in high reverberation.

In the series of \cite{Epain_2012}, \cite{Noohi_2013} and \cite{Noohi_2015}, the DOA estimation method is proposed based on the independent component analysis (ICA) by using directional sparsity of sound sources. In \cite{Epain_2012}, the unmixing matrix is extracted by applying the ICA model to the SH domain signals; Then DOA is estimated by comparing its columns  with the dictionary of possible plane-wave source directions steering vectors.
Since this method suffers from low resolution, by combining ICA and sparse recovery methods its performance can be improved \cite{Noohi_2013}.
 In \cite{Noohi_2015}, the authors improve the convergence of solvers to a local minimum by using spatial location of the sound sources as a primary information.   In all of these  methods, the proper estimation is achievable in special scenarios. Also,  DOA estimation strongly degrades in the  reverberant and noisy environment. 

 The linear signal model in the SH domain and capturing 3-D information of sources without spatial ambiguity using spherical microphone arrays (SMA) motivated us to estimate DOAs  in the SH domain. Considering the general model of the received signal in the SH domain, we propose a new  expectation maximization (EM) algorithm  for deterministic maximum likelihood (ML)  DOA  estimation 
for $L$ sound sources in the presence of spatially
nonuniform noise.
Furthermore,  a  new closed-form Cramer-Rao bound (CRB) for the deterministic ML DOA estimation  is derived for the signal model in the SH domain.
The ML estimator requires an exhaustive search in a $2L$-dimensional space. In order to reduce
the complexity, we break down the ML estimation to $L$ $2$-dimensional space. 
 Simulation results indicate that
the proposed algorithm is robust in  the reverberant and
noisy environments in the large range of signal to noise ratios (SNRs).
Based on the simulations, the proposed method
can provide   improvement in the robustness
  in   reverberant and
noisy environments.  

%

The remainder of this paper is organized as follows. In section II, 
 the signal model in the SH domain is  investigated.
 Section III indicates the proposed EM algorithm for the  ML DOA estimation.
 The derivation of  the deterministic CRB of the 
  signal model is presented  in detail in section IV. The evaluation and comparison of the proposed algorithm with other methods through different scenarios is reported in section V. Finally,  section VI concludes the paper.
\section{Signal Model}
In this section, a model for the received signal in  the SH domain is presented using the approach provided in \cite{Kumar_2015}.
 Consider the spherical array of $I$ identical omnidirectional microphones and the  $i$'th microphone located at  Cartesian coordinates  of ${\bm r}_i = [r \sin \theta_i\cos \phi_i,r \sin \theta_i\sin \phi_i,r \cos \theta_i]^T$, where $(r, \theta_i, \phi_i)$ denote the corresponding spherical coordinates. 
The notations describing the spherical geometry  are illustrated in Fig. \ref{fig:notation}.
\begin{figure}
	\centering
	\includegraphics[width=0.7\linewidth]{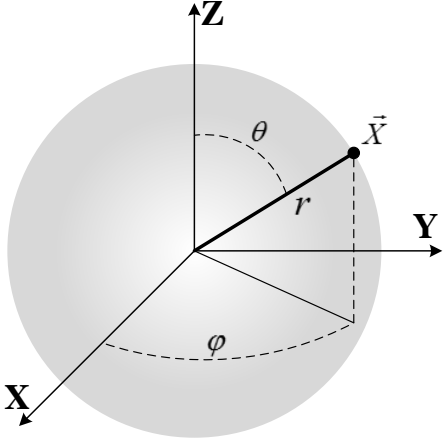}
	\caption{The notations describing the spherical geometry}
	\label{fig:notation}
\end{figure}
Assume that there exist $L$  plane-wave source signals where $l$'th source  impinging  in the angular direction 
 ${\bm \Psi}_l \stackrel{\Delta}{=} (\theta'_l,\phi'_l)$
with wave-number $k$. 
The received signal at the $i$'th microphone from the $l$'th source at time $t$ is
$ s_l\left(t-\tau_i(\mathbf{\Psi}_l)\right) $, where $\tau_i(\mathbf{\Psi}_l)$ is the  propagation delay of the $l$'th  source between the reference and $i$'th microphone.
For a narrow band sound source, the received signal can be written in this form:
\begin{equation}
s_l(t-\tau_i(\mathbf{\Psi}_l)) = e^{-j\bm{k}_l^T \bm{r}_i} s_l(t),
\end{equation}
 where  
$\bm {k}_l =- [k \sin\theta'_l \cos\phi'_l,$ $ k \sin\theta'_l \sin\phi'_l, k \cos\theta'_l]^T $ is the wave-vector corresponding to the $l$'th plane-wave. 
The received signal at $i$'th microphone at time $t$ is \cite{Krim_1996}:
\begin{equation}
x_i(t) = \sum_{l=1}^L e^{-j\bm {k}_l^T \bm {r}_i} s_l(t) + n_i(t),\quad 1\leq i\leq I,
\label{equ:InsPress2}
\end{equation}
where  
 $n_i(t)$ is the additive white  Gaussian noise with zero mean and variance 
 $\sigma^2$  of the $i$'th microphone. The received signal in  (\ref{equ:InsPress2}) can be restated in a matrix form as:
\begin{equation}
{\bm x}(t) = {\bf A}({\bm \Psi }){\bm  s}(t) + {\bm  n}(t),
\label{equ:VecInsPress}
\end{equation}
where
$
{\bm x}(t)\stackrel{\Delta}{=}
\begin{bmatrix}
{x}_1(t),  
{x}_2(t),
\ldots,
{x}_I(t)
\end{bmatrix}^T
\label{equ:PresInsVec}
$,
$\bm {s}(t) \stackrel{\Delta}{=} [	s_1(t) ,$ $	s_2(t) ,\ldots,	s_L(t) ]^T$,
$\bm {n}(t) \stackrel{\Delta}{=} [	n_1(t) ,	n_2(t) ,\ldots,	n_I(t) ]^T$,
$\mathbf{\Psi} \stackrel{\Delta}{=} \{\mathbf{\Psi}_l,l=1,\ldots,L\}$
and 
 ${\bf A}$ is the  $I \times L$ direction matrix  which is composed of the signal direction vectors as: 
\begin{equation}
{\bf A}({\bf \Psi }) \stackrel{\Delta}{=} 
\begin{bmatrix}
{\bm a}({\bf \Psi }_1),{\bm a}({\bf \Psi }_2),\ldots,{\bm a}({\bf \Psi }_L)
\end{bmatrix}, 
\end{equation}
where
\begin{equation}
{\bm a}({\bf \Psi }_l) \stackrel{\Delta}{=} 
\begin{bmatrix}
e^{-j\mathbf{k}_l^T \mathbf{r}_1},e^{-j\mathbf{k}_l^T \mathbf{r}_2},\ldots,e^{-j\mathbf{k}_l^T \mathbf{r}_I}
\end{bmatrix}^T.
\label{equ:SteeringMat}
\end{equation}
The $i$'th element of the ${\bm a}({\bf \Psi }_l)$ is 
 the incident sound field to the $i$'th microphone of the array from the $l$'th  unit amplitude plane-wave. On the other hand, by solving the  wave equation in the spherical coordinates \cite{Williams_2000}, the following equality can be obtained:
 \begin{equation}
 e^{-j\bm {k}_l^T \bm {r}_i} = \sum_{n=0}^{\infty} \sum_{m=-n}^{n} b_n(kr)  Y_n^m({\bf \Psi}_l) Y_n^m({\bf \Phi}_i),
 \label{equ:PlwExpan}
 \end{equation}
where $b_n(kr)$  is the  mode strength of order $n$ and it is defined  for open sphere as:
    \begin{equation} 
b_n(kr) \stackrel{\Delta}{=} 
4\pi j^n j_n(kr), 
\label{equ:ModAmpDef}
\end{equation}
  ${\bm \Phi}_i \stackrel{\Delta}{=} (\theta_i,\phi_i)$ and $Y_n^m(\cdot)$  is the real-valued  spherical harmonic  of order $n$ and degree $m$ defined as:
       \begin{align}
    Y_l^m({\theta,\phi}) \stackrel{}{=}& \sqrt {{{2l + 1} \over {4\pi}}{{(l - m)!} \over {(l + m)!}}} \;P_l^{\vert m \vert}({\cos \theta}) \nonumber \\ 
    &\qquad\quad\times\left\{
    \begin{matrix}
    \cos(m\phi) &  {\rm for }\, m \geq 0 \\
    \sin(\vert m\vert\phi) & {\rm  for }\, m < 0 
    \end{matrix}
    \right.      \label{equ:SpherHarm}.
    \end{align}
In (\ref{equ:ModAmpDef}) and (\ref{equ:SpherHarm}), $j = \sqrt{-1}$, $j_n$ and $h_n$  are the spherical Bessel and Henkel functions, $r_0\leq r$ is the radius of the rigid sphere and  $P_n^m$ is the associated Legendre polynomial of order $n$ and degree $m$ \cite{Driscoll_1994,Whittaker_1996}.
Applying a proper truncation order $N$ \cite{Sun_2012} to (\ref{equ:PlwExpan}), the sound field can be approximated inside a sphere of radius $\hat{r}$   centered at the  origin, as follows \cite{Epain_2012}:
\begin{equation}
e^{-j\bm {k}_l^T \bm {r}_i} = \sum_{n=0}^{N} \sum_{m=-n}^{n} b_n(kr)  Y_n^m({\bf \Psi}_l) Y_n^m({\bf \Phi}_i), \quad \parallel {\bm r_i} \parallel \leq \hat{r},
\label{equ:PlwExpanTrunc}
\end{equation}
where $\hat{r}={2N\over ek_l}$ and $e$ is       the Euler number. Rewriting (\ref{equ:PlwExpanTrunc})
in the matrix form, we have:
\begin{equation}
{\bf A}({\bf\Psi}) = {\bf Y}({\bf\Phi }){\bf B}(kr){{\bf Y}}({\bf\Psi })^T,
\label{equ:DecompInsPress}
\end{equation}
where $\mathbf{\Phi} \stackrel{\Delta}{=} \{\mathbf{\Phi}_i,i=1,\ldots,I\}$ and
${\bf Y}({\bf\Psi })$ is the source spherical harmonics matrix of size $L \times (N+1)^2$ and defined as:
\begin{equation}
{\bf Y}({\bf\Psi }) \stackrel{\Delta}{=}  [{\bm  y}({\bf\Psi }_1)^T,{\bm y}({\bf\Psi }_2)^T,\ldots,{\bm y}({\bf\Psi }_L)^T]^T 
\label{equ:SpheHarmMat}
\end{equation}
where
\begin{equation*}
{\bm y}({\bf\Psi }_l) = [Y_0^0({\bf\Psi }_l),Y_1^{-1}({\bf\Psi }_l),Y_1^0({\bf\Psi }_l),Y_1^1({\bf\Psi }_l),\ldots,Y_N^N({\bf\Psi }_l)],
\label{equ:SpheHarmVec}
\end{equation*}
the array spherical harmonics matrix, ${{\bf Y}}({\bf\Phi })$,  is the size of $I\times (N+1)^2$ and defined similar to (\ref{equ:SpheHarmMat}) and the mode strength matrix, ${\bf B}(kr)$, is the size of  $(N+1)^2 \times (N+1)^2$ and defined as follows:
\begin{equation*}
{\bf B}(kr) \stackrel{\Delta}{=} \text{diag} \big\{{b_0}(kr),{b_1}(kr),{b_1}(kr),{b_1}(kr), \ldots,{b_N}(kr) \big\}.
\end{equation*}

The spherical harmonics decomposition of the received signal  can be obtained as\cite{Driscoll_1994,Williams_2000}:
\begin{equation}
{x_{n,m}}(t) = \int_{\Omega\in S^2}  x(t) Y_n^m(\Omega) \, d\Omega  
\label{equ:xnmint}
\end{equation}
where $\Omega = (\theta,\phi)$ and 
${x_{n,m}}(t)$ are  the coefficients of the spherical harmonics decomposition and $S^2$ denotes the entire surface
area of the unit sphere. 
Since the number of microphones is limited, obtaining ${x_{n,m}}(t)$ using   \eqref{equ:xnmint} is not applicable;
We do not access  $ x(t)$ on the entire surface of the array. Actually the
spherical  microphone array (SMA) perform spatial sampling of $ x(t)$ using real-valued sampling   weights, $\alpha_i$,  corresponding to the $i$'th microphone \cite{Chen_2015,Rafaely_2005}:
\begin{equation}
{x_{n,m}}(t) \cong \sum\limits_{i = 1}^I {\alpha_i}{x_i}(t) {Y_n^m}({{\bf \Phi}_i}),
\label{equ:ComSpherWei}
\end{equation}
Equation (\ref{equ:ComSpherWei}) can be represented as in a matrix form
\begin{equation}
{{\bm  x}_{{\bf nm}}}(t) = {{\bf Y}}({\bf \Phi})^T{\bf \Sigma} {\bm x}(t),
\label{equ:VecCompSphe}
\end{equation}
where
\begin{align*}
{\bf \Sigma}	&\stackrel{\Delta}{=} \text{diag} \big\{\alpha_1,\alpha_2,\ldots,\alpha_I \big\} \text{ and }\\
{{\bm  x}_{{\bf nm}}}(t)& \stackrel{\Delta}{=}[x_{0,0}(t),x_{1,-1}(t),x_{1,0}(t),x_{1,1}(t),\ldots,x_{N,N}(t)]^T.
\end{align*}
 Considering  (\ref{equ:ComSpherWei}), the spherical
 harmonics are  orthonormal as represented in \cite{Rafaely_2005}
 \begin{equation}
 {{\bf Y}}({\bf \Phi })^T{\bf \Sigma {\bf Y}}({\bf \Phi }) = {\bf I},
 \label{equ:OrthDiscr}
 \end{equation}
 where $\bf I$ is ${(N+1)^2}\times {(N+1)^2}$ identity matrix. 
 Replacing (\ref{equ:DecompInsPress}) in (\ref{equ:VecInsPress}) and  multiplying both sides of the  equation by  ${\bf \Gamma} \stackrel{\vartriangle}{=}{\bf B}^{-1}(kr)	{{\bf Y}^T}({\bf \Phi }){\bf \Sigma}$ and using equations (\ref{equ:VecCompSphe}) and (\ref{equ:OrthDiscr}), the     received signal model in the SH domain can be calculated as:
 \begin{equation}
 {\bm b}(t) = {{\bf Y}}({\bf \Psi })^T{\bm s}(t) + {\bm z}(t),\quad t=1,\ldots,N_s \label{equ:DataModel1} 
 \end{equation}
 where 
 $
 {\bm b}(t) \stackrel{\Delta}{=} {\bf \Gamma} {\bm x}(t)
 $,
  $
 {\bm z}(t) \stackrel{\Delta}{=} {\bf \Gamma} {\bm n}(t) $ and $N_s$ is the number of snapshots.
 ${\bm b}(t)$ is named  higher-order ambisonic (HOA) signal  \cite{Epain_2012} and   
  ${\bm z}(t)$   is the noise vector in the SH domain.
  According to \eqref{equ:DataModel1},  the HOA signal is a linear instantaneous mixture of the sources signal. 
Transforming the received  signals at the array to the SH domain is performed by applying the time domain encoding filter, ${\bf\Gamma}$, to the received signal of the  array \cite{Sun_2012}. It must be noticed that the transformation filter, ${\bf\Gamma}$, is known for the given array. 
   
      
   \section{Deterministic Model for DOA Estimation}
   In this section, a new ML DOA estimation based on EM algorithm in the SH domain  is proposed by considering unknown and deterministic sources.
It is worth noting that the additive noise in 
(\ref{equ:DataModel1}) is spatially  nonuniform white noise. 
Because the $ {\bf \Gamma}$ filter applied to $ {\bm n}(t) $ is not an identity matrix.
        In the following, we investigate two cases of assumption for DOA estimation in the SH domain: i)
    uniform and ii) nonuniform noise.
    
   \subsection{Uniform Noise }
      \label{subsec:uniform}
   In this subsection, the deterministic ML DOA estimation for uniform noise case \cite{Chen_2002} is reviewed. The important formulas which have been employed in the subsequent sections are discussed.
   
       Suppose that  the additive  noise vector  to be zero mean Gaussian with the covariance matrix of 
   ${\bf R}_{n}=\sigma^2\mathbf{I}$. The set of
    unknown parameters are defined as 
    ${\bf \Omega} \stackrel{\Delta}{=} \{{\bf \Psi},{\bf  S}, \sigma^2\}$, where
$   {\bf S} \stackrel{\Delta}{=} \{{\bm  s(}1),\ldots,{\bm  s(}N_s)\}$.
Thus, the likelihood function of the received signal in the SH domain will be:
\begin{align}
	{ f({\bm b};{\bf \Omega})}=&{1 \over {{(2\pi \sigma^2) ^{P N_s/2}} }}\nonumber\\ &\times\exp\left(-{1\over 2  \sigma^2} \sum_{t=1}^{N_s}\parallel{\bm  b}(t)-{{\bf Y}}({\bf \Psi })^T{\bm s}(t) \parallel^2 \right)
	\label{equ:likelihood}
\end{align}
where $P=(N+1)^2$. Applying the logarithmic function  to 	\eqref{equ:likelihood},   the    log-likelihood function  will be obtained as:
   \begin{align}
   { L({\bm b};{\bf \Omega})} = -\frac{PN_s}{2}  \ln(\sigma^2) 
   -{1\over 2  \sigma^2} \sum_{t=1}^{N_s}\parallel{\bm  b}(t)-{{\bf Y}}({\bf \Psi })^T{\bm s}(t) \parallel^2 .
   \label{equ:LikelihoodFunc}
   \end{align}
   Therefore, the ML estimator of  ${\bf {\Omega}}$ can be formulated as:
   \begin{equation}
   {\bf \hat{\Omega}} =\mathop{\arg\max}\limits_{\bf \Omega}  \,	 { L({\bm b};{\bf \Omega})}.
   \label{equ:MLEst}
   \end{equation}
   In order to estimate  ${\bf \Psi}$,
the     ML estimator in (\ref{equ:MLEst}) can be simplified as:
   \begin{align}
   \left({\bf \hat{\Psi},\hat{S}}\right)&=\mathop{\arg \min}\limits_{{\bf  \Psi,S}}  -L({\bm b};{\bf \Omega'}) \nonumber  \\&= \mathop{\arg \min}\limits_{{\bf  \Psi,S}} \sum_{t=1}^{N_s}\parallel{\bm b}(t)-{{\bf Y}}({\bf \Psi })^T{\bm s}(t) \parallel^2 ,
   \label{equ:OptUni}
   \end{align}
   where ${\bf \Omega'} \stackrel{\Delta}{=} \{{\bf \Psi},{\bf  S}\}$.
   Here ${\bm s}(t)$  and $\bf \Psi$  are the linear and non-linear parameters of our optimization problem,
    respectively. Minimizing the objective function in     (\ref{equ:OptUni})     requires an exhaustive search in $(2L+LN_s)$-dimension space. 
    To decrease the computational complexity of such joint optimization problems, an iterative process is proposed based on  \cite{Chen_2008} as follows:
\\    1) Initialize $\bf \Psi$ and find the optimal estimator of ${\bm s}(t)$ as:
    \begin{equation}
    { \hat{\bm s}(}t) = {{{\bf Y}{\bf (\Psi)}}^T}^{\dag} {\bm b}(t)
    = \left({\bf Y}({\bf \Psi}) ^{T}{\bf Y}{(\bf \Psi)}\right)^{-1} {\bf Y}({\bf \Psi})^{T}{\bm b}(t).
    \label{equ:shatuni}
    \end{equation}
    where $\dag$  represents the pseudo inverse matrix.
   \\ 2) Estimate $\bf \Psi$ by putting the optimal estimation of $\bf \Psi$ in the  objective function as:
        \begin{equation}
    {\bf \hat{\Psi}}=\mathop{\arg \min}\limits_{{\bf  \Psi}}  \sum_{t=1}^{N_s}\parallel{\bm b}(t)- {{{\bf Y}{\bf (\Psi)}}^T}{{{\bf Y}{\bf (\Psi)}}^T}^{\dag}  {\bm b}(t) \parallel^2 .
    \label{equ:OptTheta}
    \end{equation}
    3) Estimate ${\bm s}(t)$ by considering the optimal estimation of $\bf \Psi$ obtained as \eqref{equ:shatuni}.
    \\
    4) Repeat  steps 2 and 3 until the difference of the objective function between two iterations  reaches a value lower than $T_{thr}$ limit.

    Algorithm \ref{Alg:Alg1} summarizes the  iterative
    ML estimator under uniform noise case.
    
    \begin{algorithm}
    	\caption{Iterative deterministic
    		ML estimator for Uniform Noise }
    	\label{Alg:Alg1}
    	\begin{algorithmic}[1]
    		\renewcommand{\algorithmicrequire}{\textbf{Input:}}
    		\REQUIRE ${\bm b}(t),1\leq t\leq N_s$, the $t$-th vector of observation.
    		\renewcommand{\algorithmicrequire}{\textbf{Output:}}
    		\REQUIRE $[{\bf \hat{\Psi}}]$, the vector of the estimated  DOAs.
    		\\ 
    		\STATE \textbf{Initialization:} Initialize
    		$[{\bf \hat{\Psi}}]^0$ 
    		randomly and $i=1$.
    		
    		\WHILE{$\Delta{ L}>T_{thr}$}
    		\STATE Obtain $	[{\bf \hat{S}}]^{i}$  using $	[{\bf \hat{\Psi}}]^{i-1}$.
    		\STATE Obtain $[{\bf \hat{\Psi}}]^{i}$ using $	[{\bf \hat{S}}]^{i}$, $[{\bf \hat{\Psi}}]^{i-1}$.
    		\STATE $[\hat{{\bf\Omega'}}]^{i} \leftarrow \{{[{\bf  \hat{\Psi}}]^i},{[{\bf \hat{S}}]^{i}}\}$
    		\STATE Compute ${ L({\bm b}};[\hat{{\bf\Omega}}']^{i})$ and then  $\Delta{ L} = { L({\bm b}};[\hat{{\bf\Omega}}']^{i}) - { L({\bm b}};[\hat{{\bf\Omega}}']^{i-1})$.
    		\ENDWHILE
    		\RETURN $[{\bf \hat{\Psi}}]$
    	\end{algorithmic}
    \end{algorithm}

   \subsection{Nonuniform Noise }
   \label{subsec:MLNonIden}
Now let the  noise vector   be zero mean Gaussian with covariance matrix of 
${\bf R}_{n}={\bf Q} = \mathrm{diag} \left\{q_1,q_2,\ldots,q_P\right\}$.
The  set of unknown parameters are defined as
${\bf \Omega} \stackrel{\Delta}{=} \{{\bf \Psi},{\bf  S}, {\bf Q}\}$.
The   likelihood function  will be:
\begin{align}
f({\bm b};{\bf \Omega}) =& \frac{1}{(2\pi)^{PN_s/2}  |\det({\bf Q})|^{N_s/2}}\nonumber\\\qquad\qquad &\qquad\times\exp\left(-{1\over 2} \sum_{t=1}^{N_s} {\bf g}(t)^T {\bf Q}^{-1} {\bf g}(t) \right), 
\label{equ:LikeliNonuni}
\end{align}
where ${ {\bm g}}(t) \stackrel{\Delta}{=}  { {\bm b}}(t) - {\bf {Y}}({\bf \Psi)}^T{\bm s}(t)$. Thus, The log-likelihood function will be:
\begin{equation}
{ L({\bm b};{\bf \Omega})} =-{PN_s \over 2}\ln(2\pi)-\frac{N_s}{2}\sum_{i=1}^{P}\ln(q_i) -{1\over 2  } \sum_{t=1}^{N_s}\parallel{ \tilde{\bm g}}(t)\parallel^2 ,
\label{equ:LogLikelihoodFunc}
\end{equation}
where 
\begin{align}
{ \tilde{\bm g}}(t) &\stackrel{\Delta}{=} {\bf Q}^{-1/2} {\bm {g}}(t) = { \tilde{\bm b}}(t) - {\bf \tilde{Y}}({\bf \Psi)}^T{\bm s}(t),  \\
{ \tilde{\bm b}}(t) &\stackrel{\Delta}{=} {\bf Q}^{-1/2} 	{ {\bm b}}(t),  \\
{ \tilde{\bf Y}(\bf \Psi)}^T &\stackrel{\Delta}{=}{\bf Q}^{-1/2}  {\bf Y}({\bf \Psi)}^T.
\end{align}
 Therefore, the ML estimation of  $\bf \Omega$  can be written as:
\begin{equation}
{\bf \hat{\Omega}} =\mathop{\arg\max}\limits_{\bf \Omega}  	{ L(\bm b;\bf \Omega)}.
\label{equ:NonMLEst}
\end{equation}
To solve the optimization problem of \eqref{equ:NonMLEst}, an exhaustive search in $(2L+PN_s+P)$-dimensional  space is required. 
The iterative procedure mentioned in the previous subsection is employed to solve (\ref{equ:NonMLEst}).
 In order to perform ML estimation of  ${\bf \Psi}$, we need to consider ${\bf  S}$  and ${\bf Q}$ alongside altogether. 
 As a consequence, the sources' signals and the variances
   must be estimated.
 Similar to the method introduced in       \ref{subsec:uniform}, first we fix $\bf \Psi$  and ${\bm s}(t)$, and then estimate the noise variances as a function of $\bf \Psi$  and  ${\bf s}(t)$. By replacing the estimated noise variances in the objective function, 
  the sources' signals is estimated and DOAs are obtained. This procedure is explained in the following.

Equation (\ref{equ:LogLikelihoodFunc}) can be simplified to 
\begin{equation}
{ L({\bm b};{\bf \Omega})} = -\frac{N_s}{2}\sum_{j=1}^{P}\ln(q_j) -{1\over 2  } \sum_{t=1}^{N_s} \sum_{j=1}^{P} \frac{(g_j(t))^2}{q_j},
\label{equ:LogLikelihoodFunc2}
\end{equation}
where ${\bm g}(t)=[{g}_1(t),\ldots,{g}_P(t)]^T$. The  derivative of  ${ L(\bm b;{\bf \Omega})} $ with respect to $q_p$ is calculated  as:
\begin{equation}
\frac{\partial  { L({\bm b};{\bf \Omega})} }{\partial q_p} = -{N_s \over 2}{1\over q_p}+{1\over 2}\sum_{t=1}^{N_s} \frac{(g_p(t))^2}{q_p^2}.
\label{equ:DerLogLik1}
\end{equation}
Letting (\ref{equ:DerLogLik1}) to be zero, the $p$'th noise variances can be found:
\begin{equation}
\hat{q}_p ={1\over N_s} \sum_{t=1}^{N_s} \left(g_p(t)\right)^2 
={1\over N_s} \|{\bm g}_p\|^2, \quad 1\leq p \leq P,
\label{equ:Qest}
\end{equation}
where ${\bm g}_p \stackrel{\Delta}{=} \left[g_p(1),g_p(2),\ldots,g_p(N_s)\right]^T.$  
Substituting $\hat{q}_p$   into (\ref{equ:LogLikelihoodFunc2}), ${ L(\bm b;{\bf \Omega})} $  is simplified to:
\begin{align}
{ L({\bm b};{\bf \Psi},\bm s}(t)) &= -\frac{N_s}{2}\sum_{j=1}^{P}\ln(\hat{q}_j) -{1\over 2  } \sum_{t=1}^{N_s} \sum_{j=1}^{P} \frac{(g_j(t))^2}{\hat{q}_j}\nonumber \\
&=-\frac{N_s}{2}\sum_{j=1}^{P}\ln(\hat{q}_j) -{1\over 2  }  \sum_{j=1}^{P} {1\over \hat{q}_j}  \|{\bm g}_j\|^2 \nonumber \\
& =-\frac{N_s}{2}\sum_{j=1}^{P}\ln({1\over N_s} \|{\bm g}_j\|^2) -{1\over 2  } N_sP.
\end{align}
Therefore,  the  ML estimator of $\bf  \Psi$ and ${\bm s}(t)$ is given as:
\begin{equation}
\left({\bf \hat{ \Psi}},{\bm \hat{s}}(t)\right) = \mathop{\arg \min }\limits_{{\bf  \Psi},{\bm s}(t)} \sum_{j=1}^{P}\ln( \|{\bm g}_j\|^2)
\label{equ:OptNonuni}
\end{equation}
Similar to the optimization problem of (\ref{equ:OptUni}), ${ \hat{\bm s}}(t)$ can be presented as:
\begin{equation}
{ \hat{\bm s}}(t) ={\bf Y}{\bf (\Psi)}^{\dag} {\bm b}(t).
\label{equ:EstS}
\end{equation}
Substituting ${ \hat{\bm s}}(t)$  in (\ref{equ:OptNonuni}), the ML estimator of $\bf \Psi$ is obtained as:
\begin{equation}
{\bf  \hat{\Psi}} = \mathop{\arg \min }\limits_{{\bf  \Psi}} \sum_{j=1}^{P}\ln \left(  \|{ \hat{\bm g}}_j\|^2 \right)
\label{equ:OptThet2}
\end{equation}
where ${ \hat{\bm g}}(t) = {\bm b}(t) - {\bf Y}{\bf (\Psi)}^T {{\bf Y}{\bf (\Psi)}^T}^{\dag} {\bm b}(t)$
and 
 ${ \hat{\bm g}}_j$ is defined similar to ${ {\bm g}}_j$.

\subsection{Expectation Maximization Algorithm  for deterministic ML DOA Estimation for spatially  Nonuniform Noise }
In this subsection,
a new robust method based on EM algorithm is proposed for deterministic ML DOA estimation.
The EM algorithm is an iterative method for obtaining ML estimation, where the data model includes both observed and unobserved latent variables. 
First, this approach is examined for a single source case and then it will be extended  for multiple sources case. 
\subsubsection{Single Source Case}
 As it can be seen in (\ref{equ:DataModel1}),
 the relationship between the received signal vector (incomplete data) 
${\bm b}$   
and the complete data 
${\bm b}^{(l)}\text{{  for }} 1 \leq l\leq L$ 
  will be:
 \begin{equation}
 {\bm b}  = \sum_{l=1}^{L}{\bm b}^{(l)},
 \end{equation}
 where 
  ${\bm b}^{(l)}$ 
   is the HOA signal received from  $l$'th source when only this source exists in the environment. According to (\ref{equ:DataModel1}), the     received signal model in the SH domain can be stated as:
  \begin{equation}
  {{\bm b}}^{(l)}(t) = {{\bf y}}({\bf \Psi }_l)^T {s_{l}}(t) + {{\bm z}^{(l)}}(t),
  \end{equation}
  where  ${{\bm z}^{(l)}}(t)$  is the  Gaussian noise vector in the sole presence of the $l$'th  source. 
    Considering (\ref{equ:NonMLEst}), the ML estimation will be
  \begin{equation}
  {\bf \hat{\Omega}}^{(l)}= \mathop{\arg \max}\limits_{{\bf {\Omega}}^{(l)}} { L}(\bm b^{(l)},{\bf {\Omega}}^{(l)}),
  \end{equation}
  where 
  ${\bf \Omega}^{(l)} \stackrel{\Delta}{=} \{{\bf \Psi}_l,{\bf  S}^{(l)},  {\bf Q}^{(l)}\}$ and $L(\cdot)$ is defined in \eqref{equ:LogLikelihoodFunc}. 
  $   {\bf S}^{(l)} \stackrel{\Delta}{=} \{{  s_l(}1),\ldots,{  s_l(}N_s)\}$ and  
  ${\bf R}_\mathrm{n}^{(l)}={\bf Q}^{(l)} = \mathrm{diag} \{q^{(l)}_1,q^{(l)}_2,\ldots,q^{(l)}_P\}$
  is $l$'th sound source signal and the covariance matrix of the noise vector, respectively.  
 Similar to  \eqref{equ:Qest}, ${q}_p^{(l)}$ can be estimated as:
 \begin{equation}
 \hat{q}_p^{(l)} ={1\over N_s} \sum_{t=1}^{N_s} \left(g_p^{(l)}(t)\right)^2 ={1\over N_s} \|{\bm g}^{(l)}_p\|^2 ,\quad 1\leq  p \leq P,
 \label{equ:Qestl}
 \end{equation}
 where $ {\bm g}^{(l)}_p \stackrel{\Delta}{=} \left[g_p^{(l)}(1),\ldots,g_p^{(l)}(N_s)\right]^T$ and ${ {\bm g}}^{(l)}(t) \stackrel{\Delta}{=}  { {\bm b}}^{(l)}(t) - {\bf {y}}({\bf \Psi}_l)^T{ s_l}(t)$. The deterministic  ML estimation of single source DOA will be:
 \begin{equation}
 {\bf  \hat{\Psi}}_{l} = \mathop{\arg \min }\limits_{{\bf  {\Psi}}_{l}} \sum_{j=1}^{P}\ln\left( \|{ \hat{\bm g}}^{(l)}_j\|^2\right),
 \label{equ:OptThet3}
 \end{equation}
 where 
 $
 { \hat{\bm g}}^{(l)}(t) = {\bm b}^{(l)}(t) - {\bf y}{ (\bm\Psi_{l}})^T  {{\bf y}{(\bm\Psi_{l})}^T}^{\dag} {\bm b}^{(l)}(t).
 \label{equ:Ghatl}
 $
 
 

      \subsubsection{Multiple Sources Case}
The EM algorithm for deterministic ML DOA estimation is expanded for multiple sources in this part.	Step by step procedure of the algorithm is explained as follows:
	\\
    \textbf{  Initialization}: Randomly initialize the direction of sources  $[{\bf  \hat{\Psi}}]^0$. The  matrices  $[{\bf \hat{Q}}]^0$  and $[{\bf \hat{Q}}^{(l)}]^0 $ are initialized as follows:
     \begin{equation}
     [{\bf \hat{Q}}^{(l)}]^0  = {1\over P} {\bf I}_P
   \text{ and }
     [{\bf \hat{Q}}]^0 = {\bf I}_P.
     \label{equ:InitNoise}
     \end{equation}
     \\
     \textbf{Input} to the $i$'th loop: 
      $[{\bf \hat{Q}}^{(l)}]^{i-1}$  and $[{\bf  \hat{\Psi}}]^{i-1}$ .
      \\
      \textbf{Output} of the $i$'th loop: 
       $[{\bf \hat{Q}}^{(l)}]^{i}$  and $[{\bf  \hat{\Psi}}]^{i}$ .
       \\
       \textbf{Expectation step}:
        The   noise covariance matrix  is obtained from the single source case ones as:
        \begin{equation}
        [{\bf \hat{Q}}]^{i-1} = \sum_{l=1}^{L} [{\bf \hat{Q}}^{(l)}]^{i-1}.
        \end{equation}
        The noise factor of the $l$'th single source, $\gamma^{(l)}$, is 
             calculated as:
         \begin{equation}
         \gamma^{(l)} = \frac{\mathrm{trace}\left([{\bf \hat{Q}}^{(l)}]^{i-1} \right)}{\mathrm{trace}\left(	[{\bf \hat{Q}}]^{i-1}\right)}.
         \end{equation}
        The HOA signal of each  source can be estimated as:
         \begin{align}
         { \hat{\bm b}}^{(l)}(t) &= \mathrm{E}\left\{{\bm {b}}^{(l)}(t)  \vert  {\bm {b}}(t)  \right\} 
         = \mathrm{E}\left\{{{\bf Y}}({\bf \Psi }_l)^T {s_{l}}(t) + {{\bm z}^{(l)}}(t) \vert  {\bm {b}}(t)  \right\} 
          \nonumber \\ 
         & \approx{{\bf Y}}([{\bf  \hat{\Psi}}_l]^{i-1})^T\hat{s}_{l}(t)+ \gamma^{(l)} \left({\bm {b}}(t)  -  {\bf Y} ([{\bf  \hat{\Psi}}]^{i-1})^T { \hat{\bm s}}( t)\right),
         \label{equ:bhatlt}
         \end{align}
         where ${ \hat{\bm s}}(t)$   is obtained using (\ref{equ:EstS}).
         \\
         \textbf{Maximization step}: The goal of this step is to find $[{\bf  \hat{\Psi}}]^{i}$. 
        The  vector  ${\bf Y} ([{\bf  \hat{\Psi}}_l]^{i})$ can be obtained as a function  of  $[{\bf  \hat{\Psi}}_l]^{i}$.
          Then, $ { \hat{\bm g}}^{(l)}(t)$ is constructed: 
           \begin{align}
           { \hat{\bm g}}^{(l)}(t) &=  { {\hat{\bm b}}}^{(l)}(t) - 	{{{\bf Y}}}\left([{\bf  \hat{\Psi}}_l]^{i}\right)^T	{{{\bf Y}}}{\left([{\bf  \hat{\Psi}}_l]^{i}\right)^T} ^{\dag} { {\hat{\bm b}}}^{(l)}(t)  \label{equ:GhatlAlg}  \text{ \,and}\\
           { \hat{\bm g}}^{(l)}_j &= \left[\hat{g}_j^{(l)}(1),\hat{g}_j^{(l)}(2),\ldots,\hat{g}_j^{(l)}(N_s)\right]^T.
           \label{equ:VecglAlg}
           \end{align}
           Note that $ {  \hat{\bm g}}^{(l)}_i $  is  a function of  $[{\bf  \hat{\Psi}}_l]^{i}$.  Therefore, the optimization problem to find $[{\bf  \hat{\Psi}}_l]^{i}$  will be:
           \begin{equation}
           [{\bf  \hat{\Psi}}_l]^{i}= \mathop{\arg \min }\limits_{[{\bf  {\Psi}}_l]^{i}} \sum_{j=1}^{P}\ln\left( \|{\bf \hat{g}}^{(l)}_j\|^2\right),\quad 1\leq l \leq L.
           \label{equ:OptThetAlg}
           \end{equation}
           After Estimating $[{\bf  \hat{\Psi}}_l]^{i}$,  the vector  ${\bf Y} ([{\bf  \hat{\Psi}}_l]^{i})$  can be obtained. According  to (\ref{equ:Qestl}), the elements of the noise variance vector are estimated as:
            \begin{equation}
            \left[\hat{q}_p^{(l)}\right]^i = {1\over N_s} \left\|\hat{{\bm g}}^{(l)}_p\right\|^2,\quad 1\leq  p \leq P,
            \label{equ:QestlAlg}
            \end{equation}
            \begin{equation}
        [{\bf \hat{Q}}^{(l)}]^{i} = \mathrm{diag} \left\{[q^{(l)}_1]^{i} ,\ldots,[q^{(l)}_P]^{i}\right\}.
            \end{equation}

             Using the EM  algorithm, the ML estimation of DOA  of each source can be estimated separately. 
             By comparing  (\ref{equ:OptThetAlg}) with  (\ref{equ:OptThet2}), it can be seen that the search space is reduced from $2L$-dimensional  in (\ref{equ:OptThet2})  to $L$ 2-dimensional in (\ref{equ:OptThetAlg}).
             This improvement significantly decreases the optimization complexity. 
             
             The proposed EM algorithm for deterministic ML DOA
             estimation for nonuniform noise case is summarized in Algorithm \ref{Alg:Alg2}.
             
             
              \begin{algorithm}
             	\caption{
             	EM algorithm for deterministic
             	ML estimation  for  nonuniform noise }
             	\label{Alg:Alg2}
             	\begin{algorithmic}[1]
             		\renewcommand{\algorithmicrequire}{\textbf{Input:}}
             		\REQUIRE ${\bm b}(t),1\leq t\leq N_s$, the $t$-th vector of observation.
             		\renewcommand{\algorithmicrequire}{\textbf{Output:}}
             		\REQUIRE $[{\bf \hat{\Psi}}]$, the vector of the estimated  DOAs.
             		\\ 
             		\STATE \textbf{Initialization:} Initialize
             		$[{\bf \hat{\Psi}}]^0$ 
             		randomly, $i=1$,     $[{\bf \hat{Q}}]^0$  and $[{\bf \hat{Q}}^{(l)}]^0 $  as follows:
             		\begin{equation}
             		[{\bf \hat{Q}}^{(l)}]^0  = {1\over P} {\bf I}_P
             		\text{ and }
             		[{\bf \hat{Q}}]^0 = {\bf I}_P.
             		\label{equ:InitNoise}
             		\end{equation}
             		
             		\WHILE{$\Delta{ L}>T_{thr}$}
             		\STATE \textbf{Expectation step}:
             		\vspace{3pt}
        \STATE $ [{\bf \hat{Q}}]^{i-1} \leftarrow \sum_{l=1}^{L} [{\bf \hat{Q}}^{(l)}]^{i-1}$.
        \vspace{3pt}
        \STATE $\gamma^{(l)} \leftarrow \frac{\mathrm{trace}\left([{\bf \hat{Q}}^{(l)}]^{i-1} \right)}{\mathrm{trace}\left(	[{\bf \hat{Q}}]^{i-1}\right)}$.
             		\STATE Obtain $	{ \hat{\bm b}}^{(l)}(t)$  using \eqref{equ:bhatlt}.
             		\STATE Obtain $	{ \hat{\bm s}}(t)$  using \eqref{equ:EstS}.
             		\STATE \textbf{Maximization step}:
             		\STATE ${ \hat{\bm g}}^{(l)}(t) \leftarrow  { {\hat{\bm b}}}^{(l)}(t) - 	{{{\bf Y}}}\left([{\bf  \hat{\Psi}}_l]^{i}\right)^T	{{{{\bf Y}}}\left([{\bf  \hat{\Psi}}_l]^{i}\right) ^T}^{\dag} { {\hat{\bm b}}}^{(l)}(t)$
             		\STATE $[{\bf  \hat{\Psi}}_l]^{i} \leftarrow \mathop{\arg \min }\limits_{[{\bf  {\Psi}}_l]^{i}} \sum_{j=1}^{P}\ln\left( \|{\bf \hat{g}}^{(l)}_j\|^2\right),\quad 1\leq l \leq L$
             		\STATE  $\left[\hat{q}_p^{(l)}\right]^i \leftarrow {1\over N_s} \left\|\hat{{\bm g}}^{(l)}_p\right\|^2,\quad 1\leq  p \leq P$
             		\STATE $ [{\bf \hat{Q}}^{(l)}]^{i} \leftarrow \mathrm{diag} \left\{[q^{(l)}_1]^{i} ,\ldots,[q^{(l)}_P]^{i}\right\}$
             		\vspace{3pt}
             		\STATE Compute ${ L({\bm b}};[\hat{{\bf\Omega}}]^{i})$ and then  $\Delta{ L} = { L({\bm b}};[\hat{{\bf\Omega}}]^{i}) - { L({\bm b}};[\hat{{\bf\Omega}}]^{i-1})$.
             		\ENDWHILE
             		\RETURN $[{\bf \hat{\Psi}}]$
             	\end{algorithmic}
             \end{algorithm}

             \section{ Cramer-Rao Bound}
               In this section,   CRB  of the  deterministic 
               DOA estimator will be derived for a signal model with the spatially nonuniform noise.
               This  work is the extension  of \cite{Kumar_2015} and  \cite{Chen_2008} to the  deterministic model of the sound sources in the SH domain.

             \begin{theorem}
             	\label{theo:CRLB}
             	 CRB  of the  deterministic ML DOA estimator
             	 for spatially nonuniform noise in the SH domain  is given by
             \begin{align}
             \mathrm{var}(\theta_l) &\geq  [C_1]_{ll}, l=1,\cdots,L,\\
             \mathrm{var}(\phi_l) &\geq  [C_2]_{ll}, l=1,\cdots,L,
             \end{align}
            in  which 
             \begin{align}
             C_1 &= ({\bf F}_{{\bm \theta,\bm\theta}}-{\bf F}_{{\bm \theta,\bm\phi}} {\bf F}_{{\bm \phi,\bm\phi}}^{-1} {\bf F}_{{\bm \phi,\bm\theta}})^{-1},  \\
             C_2 &= ({\bf F}_{{\bm \phi,\bm\phi}}-{\bf F}_{{\bm \phi,\bm\theta}} {\bf F}_{{\bm \theta,\bm\theta}}^{-1} {\bf F}_{{\bm \theta,\bm\phi}})^{-1} ,
             \end{align}
             \begin{equation}
             {\bf F}_{\bm{\alpha},\bm{\beta}} =  {\bf S}_s   \odot \left(  {\dot {\bf Y}}_{{\bm \alpha}} {\bf {C}}_{b}^{- 1}    {\dot {\bf Y}}_{{\bm \beta}}^T  \right),
             \end{equation}
             where $ {\bm \alpha}$ and  ${\bm \beta}$ can 
             be equal to ${\bm \theta}$ or ${\bm \phi}$  independently,
              ${\bf {C}}_{b}$ is the covariance of  the HOA signal,  $\bm S_s= \sum_{t=1}^{N_s}   {\bm s}(t)  {\bm s}(t)^T  $ and
             \begin{equation}
             [\dot{{\bf Y}}_{\bm \alpha}]_{ij}  = {\partial \over \partial \alpha_j}[{\bf Y}]_{ij},i=1,\ldots,P \, \mathrm{and}\, j=1,\ldots,L.
             \end{equation}
              \end{theorem}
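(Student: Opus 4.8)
The plan is to obtain the bound as the inverse of the Fisher information matrix (FIM) of the complete deterministic parameter set, following the approach of \cite{Chen_2008} but adapted to the SH data model \eqref{equ:DataModel1}. First I would stack all unknowns into one real parameter vector $\bm\eta=[\bm\theta^T,\bm\phi^T,\mathrm{vec}(\mathbf{S})^T,\mathbf{q}^T]^T$, where $\mathbf{q}=[q_1,\dots,q_P]^T$ collects the nonuniform noise variances. Because $\mathbf{b}(t)$ is Gaussian with mean $\bm\mu(t)=\mathbf{Y}(\bm\Psi)^T\mathbf{s}(t)$ and covariance $\mathbf{Q}$, I would differentiate the log-likelihood \eqref{equ:LogLikelihoodFunc} twice and take expectations, which yields the Slepian--Bangs form
\begin{align}
[\mathbf{F}]_{ab}={}&\sum_{t=1}^{N_s}\frac{\partial\bm\mu(t)^T}{\partial\eta_a}\mathbf{Q}^{-1}\frac{\partial\bm\mu(t)}{\partial\eta_b}\nonumber\\
&+\frac{N_s}{2}\,\mathrm{tr}\!\left(\mathbf{Q}^{-1}\frac{\partial\mathbf{Q}}{\partial\eta_a}\mathbf{Q}^{-1}\frac{\partial\mathbf{Q}}{\partial\eta_b}\right).
\end{align}

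The first structural fact I would exploit is that the mean $\bm\mu(t)$ depends only on $(\bm\theta,\bm\phi,\mathbf{S})$ while the covariance $\mathbf{Q}$ depends only on $\mathbf{q}$. Hence the first term above vanishes whenever an index is a noise variance and the second vanishes whenever an index is a DOA or a signal sample, so $\mathbf{F}$ is block diagonal with the noise-variance block decoupled; the variances $\mathbf{q}$ therefore drop out and do not inflate the DOA bound. Using the derivatives $\partial\bm\mu(t)/\partial\theta_l=\dot{\mathbf{y}}_{\theta_l}\,s_l(t)$, $\partial\bm\mu(t)/\partial\phi_l=\dot{\mathbf{y}}_{\phi_l}\,s_l(t)$ and $\partial\bm\mu(t)/\partial s_l(t)=\mathbf{y}(\bm\Psi_l)^T$, where $\dot{\mathbf{y}}_{\theta_l}$ is the $\theta_l$-derivative of the $l$-th column of $\mathbf{Y}(\bm\Psi)^T$, the snapshot factor $s_l(t)s_{l'}(t)$ factors out of each quadratic form and the raw DOA--DOA blocks acquire the Hadamard structure $[\mathbf{S}_s\odot(\dot{\mathbf{Y}}_{\bm\alpha}\mathbf{Q}^{-1}\dot{\mathbf{Y}}_{\bm\beta}^T)]_{ll'}$ with $\mathbf{S}_s=\sum_t\mathbf{s}(t)\mathbf{s}(t)^T$.

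The main obstacle is the deterministic source samples $\mathbf{S}$: these are genuine nuisance parameters whose cross-blocks $\mathbf{F}_{\bm\theta,\mathbf{S}}$ and $\mathbf{F}_{\bm\phi,\mathbf{S}}$ are in general nonzero, so they cannot simply be discarded. My plan is to remove them by a Schur complement of the DOA block against the $\mathbf{S}\mathbf{S}$ block. In carrying out this reduction the source-subspace contribution merges with the noise covariance, so the whitening weight acting on the derivative vectors is promoted from $\mathbf{Q}^{-1}$ to the inverse HOA data covariance $\mathbf{C}_b^{-1}$ of the statement, and the concentrated information matrices $\mathbf{F}_{\bm\alpha,\bm\beta}$, $\bm\alpha,\bm\beta\in\{\bm\theta,\bm\phi\}$, emerge. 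Checking that this collapses to exactly the compact Hadamard form --- rather than leaving a residual expression with an explicit orthogonal projector onto the signal subspace --- is the delicate bookkeeping step where I expect the bulk of the work to lie.

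Finally, I would assemble the concentrated DOA information matrix $\begin{bmatrix}\mathbf{F}_{\bm\theta,\bm\theta}&\mathbf{F}_{\bm\theta,\bm\phi}\\\mathbf{F}_{\bm\phi,\bm\theta}&\mathbf{F}_{\bm\phi,\bm\phi}\end{bmatrix}$ and invert it by the block-inverse identity. Its $\bm\theta$-diagonal block equals $(\mathbf{F}_{\bm\theta,\bm\theta}-\mathbf{F}_{\bm\theta,\bm\phi}\mathbf{F}_{\bm\phi,\bm\phi}^{-1}\mathbf{F}_{\bm\phi,\bm\theta})^{-1}=C_1$ and its $\bm\phi$-diagonal block equals $C_2$, which is routine. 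Reading off the $l$-th diagonal entry and invoking $\mathrm{var}(\hat\eta_a)\geq[\mathbf{F}^{-1}]_{aa}$ then delivers $\mathrm{var}(\theta_l)\geq[C_1]_{ll}$ and $\mathrm{var}(\phi_l)\geq[C_2]_{ll}$, completing the argument.
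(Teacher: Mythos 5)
Your overall architecture (Gaussian log-likelihood, Slepian--Bangs second-derivative computation, block FIM in $\bm\theta,\bm\phi$, block inversion to read off $C_1$ and $C_2$) matches the paper's endgame, but your handling of the nuisance parameters diverges from the paper in a way that breaks the argument. The paper's proof does \emph{not} put $\mathrm{vec}(\mathbf{S})$ or $\mathbf{q}$ into the parameter vector at all: in \eqref{equ:ParamDeter} it takes ${\bf\Theta}=[\bm\theta^T,\bm\phi^T]^T$ only, differentiates the log-density solely with respect to these $2L$ angles, and arrives at \eqref{equ:FishInfon} directly with ${\bf C}_b=\mathrm{cov}({\bm b}(t))={\bf Q}$. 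That is precisely why the compact Hadamard form ${\bf S}_s\odot(\dot{\bf Y}_{\bm\alpha}{\bf C}_b^{-1}\dot{\bf Y}_{\bm\beta}^T)$ appears with no projector: the source waveforms are effectively treated as known when the information matrix is formed, and only the final block-inversion step is needed.

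The step you flag as ``delicate bookkeeping'' is in fact the step that fails. Your decoupling of the $\mathbf{q}$-block is fine, but the Schur complement against the $\mathbf{S}$-block does not collapse to the stated formula: for the $(l,l')$ entry it yields
\begin{equation*}
\sum_{t=1}^{N_s} s_l(t)s_{l'}(t)\,\dot{\bm y}_{\theta_l}^T{\bf Q}^{-1/2}\Bigl({\bf I}-{\bf Q}^{-1/2}{\bf Y}^T\bigl({\bf Y}{\bf Q}^{-1}{\bf Y}^T\bigr)^{-1}{\bf Y}{\bf Q}^{-1/2}\Bigr){\bf Q}^{-1/2}\dot{\bm y}_{\theta_{l'}},
\end{equation*}
so the orthogonal projector onto the complement of the whitened signal subspace survives. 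It cannot ``merge with the noise covariance'': since the sources are deterministic, the paper identifies ${\bf C}_b$ with ${\bf Q}$ itself, so there is no promotion of ${\bf Q}^{-1}$ to a different data covariance. Carried out honestly, your route proves the standard Stoica--Nehorai-type deterministic CRB with the projector (a strictly larger bound, since the projected information matrix is smaller in the Loewner order), not the theorem as stated. To obtain the theorem you must do what the paper does and restrict the Fisher information computation to $(\bm\theta,\bm\phi)$ alone.
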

\begin{proof} 
	See Appendix.
\end{proof}              
		
             \section{Simulation}
             In this section, the proposed EM algorithm is evaluated and compared with the traditional standard narrow-band MUSIC algorithm
             \cite{Schmidt_1986} and the recently proposed ICA based method \cite{Epain_2012} through various            scenarios.                              
%
             In the conducted simulations, the  SMA  is an open array  of radius 15 cm  consisting   12 omnidirectional microphones.    The SMA is located at the coordinates (5 m, 7 m, 1.5 m) of a room of size 8 m $\times$ 10 m $\times$ 3 m. Two sound sources are located at 2 m distance of the array at the angular locations of  
             $(\phi^\circ,\theta^\circ)  =  (40,70)$ and $(70,60)$. 
             Configuration of the room and the position of the SMA and sources in the simulation setup are demonstrated in Fig. 	\ref{fig:simulmodel}.
             The signal to reverberation ratio (SRR) is
             almost equal to -3.5 dB and the room reverberation time (RT60) is approximately 400 ms.
             The sources play the speech signals with duration about 1 s which are sampled at 16 KHz. The impulse response for the room between the sources and the SMA is calculated using {MCRoomSim}, which is a  multichannel room acoustics simulator  \cite{wabnitz2010room}. Microphone signals and additive white Gaussian noise  are filtered with the HOA encoding filters which result in  2nd order HOA signals and SH domain noise, respectively. The length of the  HOA encoding filters is 512 and designed such that its output SNR is maximized. Then, the HOA signals
             are filtered by bandpass filters with the pass-band of 500 to 3500 Hz.  The optimization in     \eqref{equ:OptTheta} and            \eqref{equ:OptThetAlg} are performed by Nelder-Mead direct search method \cite{Lagarias_1998}.  The 
             FastlCA is used for applying the ICA algorithm in the MATLAB environment \cite{Fastica}.

\begin{figure}
	\centering
	\includegraphics[width=0.85\linewidth]{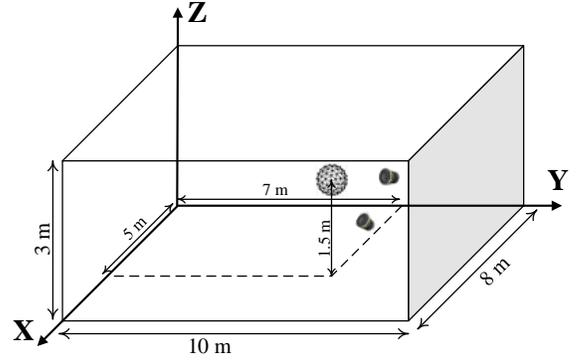}
	\caption{Configuration of the room and the position of the SMA and sources in our simulation}
	\label{fig:simulmodel}
\end{figure}
 
 In  Figs. 	\ref{fig:snapshotrmsethet} and 
 \ref{fig:snapshotrmsephi}, the average root mean square error (RMSE)  of the  estimating $\bm \theta$
   and  $\bm \phi$
 for 50 different realizations in 30 dB SNR for EM, ICA, MUSIC,  ML estimation for uniform noise case (see Alg. \ref{Alg:Alg1}) and CRB versus the number of snapshots $N_s$  are presented.
 The MUSIC algorithm does not converge below 5000 snapshots.
  As the number of snapshots increase, the RMSE of estimation decreases for all DOA estimation methods. As  expected, the EM algorithm outperforms the uniform noise case estimation, due to this fact that the nonuniform noise well matches the signal model in the SH domain.

 \begin{figure}
 	\centering
 	\includegraphics[width=1\linewidth]{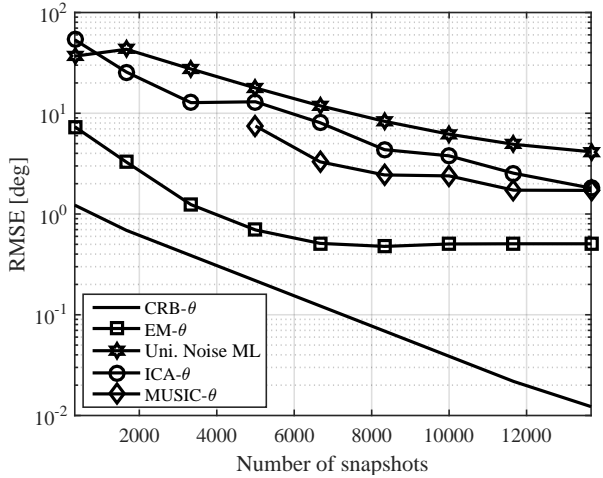}
 	\caption{RMSE for $\theta$ estimation versus the number of snapshots in 30 dB SNR}
 	\label{fig:snapshotrmsethet}
 \end{figure}
 
 \begin{figure}
 	\centering
 	\includegraphics[width=1\linewidth]{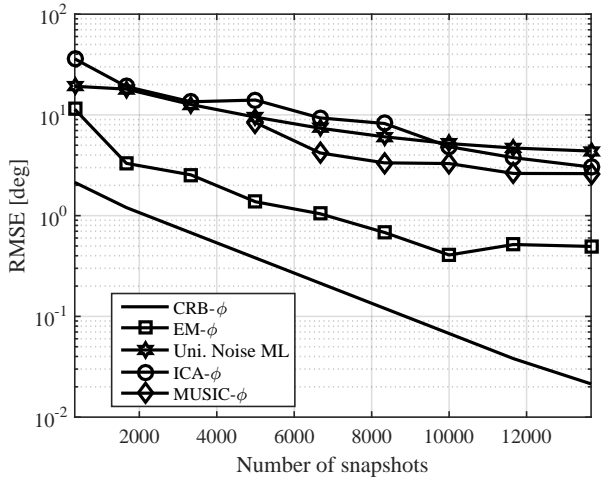}
 	\caption{RMSE for $\phi$ estimation versus the number of snapshots in 30 dB SNR}
 	\label{fig:snapshotrmsephi}
 \end{figure}

 The estimated $\bm \theta$
 and  $\bm \phi$ RMSE for the EM algorithm  as a function of the SNR through boxplot representation is plotted in Figs. \ref{fig:boxplthetaem} and  \ref{fig:boxplphiem}, respectively.
  In these figures,   for  each SNR value,  the  RMSE is obtained with the average of 50 different realizations of the proposed algorithm.  
 The box has lines at  the lower, median and upper quartile values of the RMSE. The  whiskers are lines extending from each end of the box to show  the extent of the rest of the values. Outliers are the values outside
 the ends of the whiskers. If there is no value outside the whisker,   a dot is placed at the bottom whisker. As can be seen, SNR is increased by decreasing the RMSE variance.
 Therefore in higher SNR, the estimated value is likely  closer to the true value.

\begin{figure}
\centering
\includegraphics[width=1\linewidth]{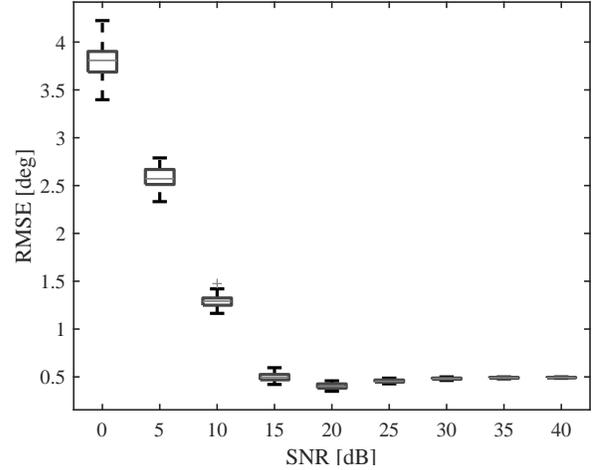}
\caption{RMSE Box-plot of EM  estimation of $\theta$}
\label{fig:boxplthetaem}
\end{figure}
   
\begin{figure}
\centering
\includegraphics[width=1\linewidth]{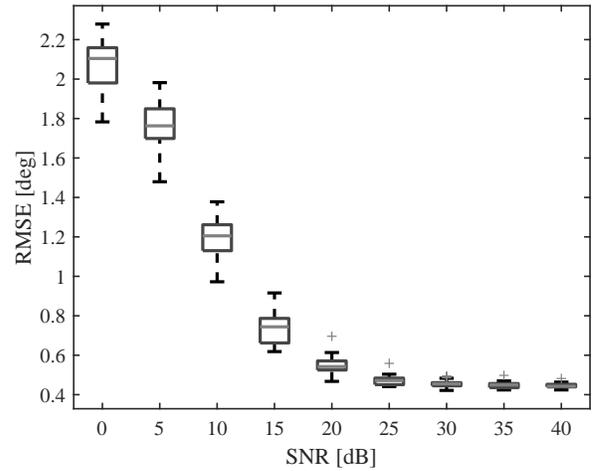}
\caption{RMSE Box-plot of EM  estimation of $\phi$}
\label{fig:boxplphiem}
\end{figure}

 The estimated $\bm \theta$  and  $\bm \phi$ RMSE for EM,
 ICA, MUSIC, uniform ML estimation and CRB versus SNR is shown in Figs.
 \ref{fig:totcompdeterthet} and  \ref{fig:totcompdeterphi}, respectively.
%
 The range of SNR values is between 0 to 40 dB.  The average of 50 different realizations
 is used to achieve each simulated point.
 As shown, the proposed algorithm 
 is  closer to the CRB compared to MUSIC and ICA.
 Performance of  the {ICA} method is highly dropped in low SNR values due to not considering the environmental noise.
  In higher SNR values, the ICA assumption becomes closer to the reality, resulting in the ICA outperforms the MUSIC.  The  EM algorithm exhibits a better performance because of considering the environmental noise  and reverberation. The signal is assumed to be independent and non-Gaussian for the ICA algorithm. But due to the reverberation,  both assumptions are not realistic for DOA estimation. 
 In order to show that the distribution of the HOA signal is Gaussian,  Kolmogorov-Smirnov  hypothesis  test is used. The test result, with  the 5\% significance level, confirms that 
 the HOA signals come from a Gaussian distribution.
  Also, the cross-correlation coefficient between the first and the second normalized HOA signals are calculated  as 
 $\rho= \mathrm{E}\{b_1 b_2\}$ which is equal to $0.834$.  For better visualization, the 
 histogram of the second HOA signal  and 
 the cross-correlation between the first and second  HOA signals are plotted in
  Figs.
  \ref{fig:DenFunHOA2} and   \ref{fig:Xcorr12}, respectively.
%
  Considering the   signal model matches to the HOA domain, the EM algorithm can achieve lower RMSE  for  estimating
 $\bm \theta$  and  $\bm \phi$  and    is closer to the CRB.  Referring to these results, we can say that the
 EM algorithm  shows at least an improvement of 6 dB in
 robustness compared to the best of MUSIC and ICA methods
 in the noisy environments.

\begin{figure}
\centering
\includegraphics[width=1\linewidth]{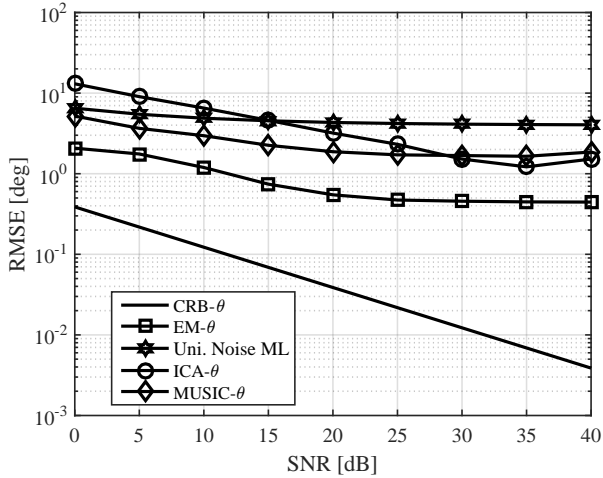}
\caption{
	RMSE comparison of $\theta$ estimation in  EM
	versus ICA and MUSIC methods along with the CRB.
	}
\label{fig:totcompdeterthet}
\end{figure}             
\begin{figure}
\centering
\includegraphics[width=1\linewidth]{TotCompDeterPhi}
\caption{
	RMSE comparison of $\phi$ estimation in  EM
	versus ICA and MUSIC methods along with the CRB.}
\label{fig:totcompdeterphi}
\end{figure}
\begin{figure}
	\centering
	\begin{tabular}[c]{cc}
		\begin{subfigure}[b]{0.45\textwidth}
			\includegraphics[width=\textwidth,height=.15\textheight]{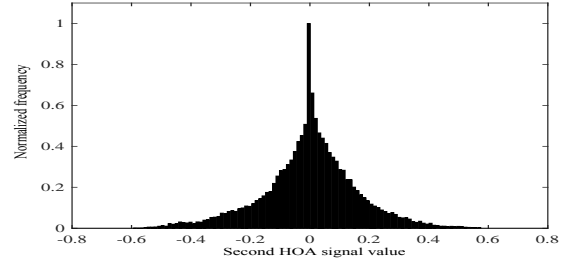}
			\caption{Histogram of the second   HOA signal}
			\label{fig:DenFunHOA2}
		\end{subfigure}\\
		\begin{subfigure}[b]{0.45\textwidth}
			\includegraphics[width=\textwidth,height=.15\textheight]{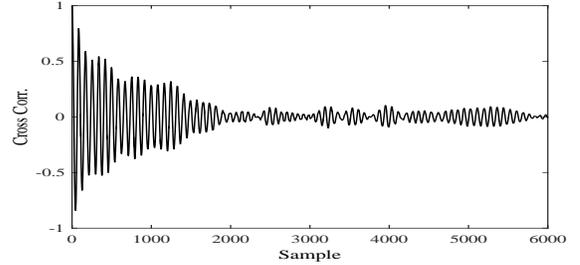}
			\caption{	Cross-correlation between the first and second  HOA signals}
			\label{fig:Xcorr12}
		\end{subfigure}
	\end{tabular}
	\caption{Evaluation of HOA signal}\label{fig:eval}
\end{figure}

 To examine the robustness of  the EM algorithm in  the reverberant environments, the average RMSE of the  estimating 
$\bm \theta$  and  $\bm \phi$  for EM, ICA and  MUSIC
 for 50 different realizations in 40 dB SNR 
 versus different RT60 are reported in  Table
    		\ref{tab:RMSRVBThet} and \ref{tab:RMSRVBPhi}.
 In the lower RT60s,  the  ICA and EM algorithm almost have the same performance. Because the ICA method does not consider
 the correlation in the HOA signals, its RMSE grows by increasing the RT60.
 Also, the MUSIC algorithm shows an acceptable  performance in the lower RT60 and degrades as RT60 increases.
 According  to Tables 	\ref{tab:RMSRVBPhi} and 	\ref{tab:RMSRVBThet},  the proposed algorithm 
 demonstrates  at least an improvement of  7 dB  in robustness  compared to the MUSIC and ICA methods in  the reverberant environments   as it was in  the noisy environments.

 \begin{table}
 	\normalsize
 	\centering
 	\caption{RMSE estimating  $\bm \theta$  (degree) in 40 dB SNR  versus different RT60s
 	}
 	\begin{tabular}{c |c| c| c}
 		\toprule
 		RT60
 		{[sec]}  &  {EM} & {ICA} & {MUSIC}  \\
 		\hline \hline
 		0.110 & 0.183 & 0.186 & 0.893 \\ 
 		0.241 & 0.249 & 0.502& 0.998 			\\  
 		0.301 &  0.290 & 0.870 & 1.564 \\ 
 		0.411 & 0.344 & 1.469 & 2.379\\
 		0.650 & 0.386 & 1.877 & 2.500 \\
 		0.799 & 0.410 & 2.198 &  2.834 \\
 		\bottomrule
 	\end{tabular}
 	\label{tab:RMSRVBThet}
 \end{table}
 
  \begin{table}
 	\normalsize
 	\centering
 	\caption{RMSE of estimating  $\bm \phi$  (degree) in 40 dB SNR versus different RT60s
 	}
 	\begin{tabular}{c| c| c| c}
 		\toprule
 		RT60
 		{[sec]}  &  {EM} & {ICA} & {MUSIC}  \\
 		\hline \hline
 		0.110 & 0.192 & 0.218 & 1.001 \\ 
 		0.241 & 0.263 & 0.701& 1.404  			\\  
 		0.301 &  0.291 & 1.000 & 1.500 \\ 
 		0.411 & 0.320 & 2.691 & 1.579 \\
 		0.650 & 0.396 & 2.740 & 1.681 \\
 		0.799 & 0.457 & 2.488 &  2.623 \\
 		\bottomrule
 	\end{tabular}
 	\label{tab:RMSRVBPhi}
 \end{table}

\section{Conclusion}
\label{sec:cncl}    
In this paper,   considering   the general model of the received signal in the SH domain, the EM algorithm is proposed for deterministic  ML estimation of DOA  of multiple sources in the presence of spatially nonuniform noise.  In order to reduce the complexity of the ML estimation, the algorithm is broken down into   two expectation and maximization steps.
In the expectation step,  the HOA signal of each single source case (latent variable) is obtained from the observed HOA signal. In the maximization step,  the DOA of each source is estimated using the corresponding  HOA signal which is obtained in the expectation step. 
The simulation results demonstrated that the proposed algorithm shows at least 6  and 7 dB better  robustness  in terms of RMSE  in the  reverberant and noisy environments, respectively, compared to the  MUSIC and ICA methods. 
Estimation of DOA through machine
learning algorithms in the HOA domain is a part of our future work.

\appendix[Proof of   Theorem \ref{theo:CRLB}]
  	First,  we define the unknown parameters  vector as 
  \begin{align}
  {\bf\Theta} &\stackrel{\Delta}{=} [\Theta_1,\Theta_2,\ldots,\Theta_{2L}]^T \nonumber \\
  &= [\bm{\theta}^{T},\bm{ \phi}^{T}]^{T} = [\theta_1,\ldots,\theta_L,\phi_1,\ldots,\phi_L]^T.
  \label{equ:ParamDeter}
  \end{align}
  According to the CRB theory,  the variance of $r$'th entry of unbiased estimator  ${\bf  \hat{\Theta}}$    satisfies the  following inequality:
  \begin{equation}
  \mathrm{var}({\bf  \hat{\Theta}}_r) \geq [{\bf F}({\bf  {\Theta}})]_{rr}^{-1},\quad 1\leq r \leq 2L,
  \label{equ:CRBoundn}
  \end{equation}
  where the element of $(r,s)$ of Fisher information matrix ${\bf F}({\bf  {\Theta}})$ is defined as:
  \begin{equation}
  [{\bf F}({\bf  {\Theta}})]_{rs} \stackrel{\Delta}{=} -\mathrm{E}\left\{{{{\partial ^2}\ln  f({{\bm b}};{\bf \Theta}) \over {\partial {\Theta _r}\partial {\Theta _s}}}} \right\}, \quad 1\leq r,s \leq 2L,
  \label{equ:FisherInfo}
  \end{equation}
  and  the density function of the observation 
  $f({{\bm b}};{\bf \Theta})$
  will be
  \begin{align}
  f({{\bm b}};{\bf \Theta}) = &{1 \over {{(2\pi) ^{P N_s/2}}\vert\det({{\bf {C}}_{b}})\vert^{N_s/2}}} 
  \nonumber \\
  &\times\exp\left(- \frac{1}{2}\sum_{t=1}^{N_s}({{{\bm b}}(t)}-\bm {\eta}(t))^T {\bf{C}}_{b}^{- 1}({{{\bm b}}(t)}-\bm {\eta}(t))\right)
  \label{equ:TotProbDen}
  \end{align}
  where  
  $  \bm {\eta}(t)=\mathrm{E}\{{{\bm  b}}({\bf \Theta};t)\} = {{\bf Y}}({\bf \Theta})^T{\bm s}(t)$  and 
  ${\bf C}_b = \mathrm{cov}({\bm b}(t)) = {\bf Q}$.
  After some mathematical manipulation, 
  the derivative of the density function with respect to  ${\Theta _r}$  and ${\Theta _s}$ can be simplified as  follows:
  \begin{align}
  {{{\partial \ln f({{\bm b}};{\bf \Theta})} \over {\partial{ \Theta}_s}}} &=\sum_{t=1}^{N_s}  \mathrm{tr}\left\{ 
  {\partial \bm {\eta}(t) \over \partial { \Theta}_s} \left({{{\bm b}}(t)}-\bm {\eta}(t)\right)^T   {\bf{C}}_{b}^{- 1}		\right\} \\
  {{{\partial^2 \ln f({{\bm b}};{\bf \Theta})} \over {\partial{ \Theta}_r}{\partial{ \Theta}_s}}} &= \sum_{t=1}^{N_s} \mathrm{tr}\left\{  \left({\partial^2 \bm {\eta}(t) \over\partial { \Theta}_r  \partial { \Theta}_s}\left({{{\bm b}}(t)}-\bm {\eta}(t)\right)^T \right. \right. \nonumber \\ &\qquad\,\,\, \qquad- 
  \left. \left. {\partial \bm {\eta}(t) \over \partial { \Theta}_s } {\partial \bm {\eta}(t) \over \partial { \Theta}_r }^T    \right)
  {\bf{C}}_{b}^{- 1} \right\} .
  \end{align}
  Therefore (\ref{equ:FisherInfo}) can be rewritten as
  \begin{equation}
  [{\bf F}({\bf \Theta})]_{rs} =  E\left\{{{{\partial^2 \ln f({{\bm b}};{\bf \Theta})} \over {\partial{ \Theta}_r}{\partial{ \Theta}_s}}} \right\}	= - \sum_{t=1}^{N_s}  {\partial \bm {\eta}(t) \over \partial { \Theta}_r }^T  {\bf{C}}_{b}^{- 1} {\partial \bm {\eta}(t) \over \partial { \Theta}_s },
  \label{equ:FishInfon}
  \end{equation}
  where
  \begin{equation}
  {\partial \bm {\eta}(t) \over \partial { \Theta}_r } =  {\partial {{\bf Y}}({\bf \Theta}) ^T\over \partial  { \Theta}_r}  {\bm s}(t)= \dot{{{\bf Y}}}_{\Theta_r}({\bf \Theta})^T
  {\bm s}(t).
  \label{equ:DiffEttan}
  \end{equation}
  The $r$'th column of  ${\bf Y}({\bf\Theta})$  is the function of $(\theta_r,\phi_r)$.
  Thus,  the derivative of   ${\bf Y}({\bf\Theta})$   respect to 
  $\theta_r$ or $\phi_r$  yields a  matrix with all zero elements except the 
  $r$'th  column.
  %
    The derivative of matrix   ${\bf Y}({\bf\Theta})^،$  respect to  $\bm \theta$ and $\bm \phi$ are defined as follows:                   
  \begin{equation}
  {\dot {\bf Y}}_{{\bm \theta}}^T \stackrel{\Delta}{=} \sum\limits_{r = 1}^L {\dot {\bf Y}}_{{\theta _r}}^T, \quad
  {\dot {\bf Y}}_{{\bm \phi}}^T  \stackrel{\Delta}{=}  \sum\limits_{r = 1}^L {\dot {\bf Y}}_{{\phi _r}}^T,
  \label{equ:VectorDerivative1}
  \end{equation}
  where the  scalar derivatives ${\dot {\bf Y}}_{{\theta _r}}^T$  and   ${\dot {\bf Y}}_{{\phi _r}}^T$ 
  are respect to ${{\theta _r}}$ and ${{\phi _r}}$, respectively. The reverse equation of (\ref{equ:VectorDerivative1}) can be expressed as:
  \begin{equation}
  {\dot {\bf Y}}_{{\theta _r}}^T= {\dot {\bf Y}}_{{\bm \theta}}^T{e_r}e_r^T, \quad
  {\dot {\bf Y}}_{{\phi _r}}^T = {\dot {\bf Y}}_{{\bm \phi}}^T{e_r}e_r^T,
  \label{equ:ScalarDriv}
  \end{equation}
  where the  vector $e_r$    is the $r$'th column of the  identity matrix   ${\bf I}_r$.  According to  (\ref{equ:ParamDeter}),
  the  Fisher information matrix can be declared with the block matrix as follows:
  \begin{equation}
  {\bf F} =  \begin{bmatrix}
  {\bf F}_{{\bm \theta,\bm\theta}} & {\bf F}_{{\bm \theta,\bm\phi}} \\
  {\bf F}_{{\bm \phi,\bm\theta}} & {\bf F}_{{\bm \phi,\bm\phi}}
  \end{bmatrix}.
  \label{equ:FisherMat3n}
  \end{equation}
  where   ${\bf F}_{{\bm \theta,\bm\phi}} $  is a $L \times L$ matrix and 
  $[{\bf F}_{{\bm \theta,\bm\phi}} ]_{rs}$ 
  is obtained the same as    (\ref{equ:FishInfon}) and  the first and second derivatives are taken with respect to $r$'th and $s$'th entry of $\bm{\theta}$  and  $\bm{ \phi}$, respectively.
  Using (\ref{equ:FishInfon})-(\ref{equ:ScalarDriv}), 
  $[{\bf F}_{{\bm \theta,\bm\phi}} ]_{rs}$  can be simplified as
  \begin{align}
  [{\bf F}_{{\bm \theta,\bm\phi}} ]_{rs} &= {{\sum}}_{t=1}^{N_s}  \left(\dot{{{\bf Y}}}_{  \theta_r}^T  {\bm s}(t) \right)^T  {\bf C}_{b}^{- 1} \left(\dot{{{\bf Y}}}_{\phi_s}^T {\bm s}(t) \right)   \nonumber \\
  &= {{\sum}}_{t=1}^{N_s}  \left(  {\dot {\bf Y}}_{{\bm \theta}}^T{e_r}e_r^T  {\bm s}(t) \right)^T  {\bf C}_{b}^{- 1} \left(  {\dot {\bf Y}}_{{\bm \phi}}^T{e_s}e_s^T  {\bm s}(t) \right)  \nonumber \\
  &= {{\sum}}_{t=1}^{N_s} {\bm s}(t)^T {e_r} e_r^T  {\dot {\bf Y}}_{{\bm \theta}} {\bf C}_{b}^{- 1}    {\dot {\bf Y}}_{{\bm \phi}}^T{e_s}e_s^T  {\bm s}(t)   \nonumber \\
  &=  {{\sum}}_{t=1}^{N_s}  e_s^T  {\bm s}(t)  {\bm s}(t)^T {e_r} e_r^T  {\dot {\bf Y}}_{{\bm \theta}} {\bf C}_{b}^{- 1}    {\dot {\bf Y}}_{{\bm \phi}}^T {e_s}.
  \label{equ:FishInfonn}
  \end{align}
  Eventually defining    ${\bf S}_s= \sum_{t=1}^{N_s}   {\bm s}(t)  {\bm s}(t)^T  $, the matrix  ${\bf F}_{\bm{\theta},\bm{\phi}}$  will be:
  \begin{equation}
  {\bf F}_{\bm{\theta},\bm{\phi}} =  {\bf S}_s   \odot \left(  {\dot {\bf Y}}_{{\bm \theta}} {\bf C}_{b}^{- 1}    {\dot {\bf Y}}_{{\bm \phi}}^T  \right), 
  \end{equation}
  where $\odot$   represents the Hadamard product and is defined for two matrices as:
  \begin{equation}
  {[\bf A \odot \bf B]_{rs}}\triangleq{[\bf A]_{rs}}{[\bf B]_{rs}}.
  \end{equation}
It must be noted that  matrices 
  ${\bf F}_{{\bm \theta,\bm\theta}} $,  ${\bf F}_{{\bm \phi,\bm\theta}}$  and  ${\bf F}_{{\bm \phi,\bm\phi}}$
  are similarly defined. 
  
  Considering the algebraic equality 
  \begin{equation}
  \begin{bmatrix}
  {\bf A}_{11} & {\bf A}_{12} \\
  {\bf A}_{21} & {\bf A}_{22}
  \end{bmatrix}^{-1} = 
  \begin{bmatrix}
  {\bf C}_1^{-1} & -{\bf A}_{11}^{-1}{\bf A}_{12}{\bf C}_2^{-1} \\ 
  -{\bf C}_2^{-1}{\bf A}_{21}{\bf A}_{11}^{-1} & {\bf C}_2^{-1}
  \end{bmatrix} ,
  \end{equation}
  where ${\bf A}_{11} $, ${\bf A}_{12} $, ${\bf A}_{21} $ and ${\bf A}_{22} $ are $L \times L$ matrix and 
  $
  {\bf C}_1 = {\bf A}_{11} - {\bf A}_{12} {\bf A}_{22}^{-1}{\bf A}_{21}$ and ${\bf C}_2 = {\bf A}_{22} -{\bf A}_{21}{\bf A}_{11}^{-1}{\bf A}_{12}
  $.
  Consequently, the CRB of ${\bf  \hat{\Theta}}$ is found using (\ref{equ:CRBoundn}):
  \begin{align}
  \mathrm{var}(\theta_l) &\geq  [C_1]_{ll}, l=1,\cdots,L,\\
  \mathrm{var}(\phi_l) &\geq  [C_2]_{ll}, l=1,\cdots,L,
  \end{align}

\section*{Acknowledgements}
The authors would like to acknowledge Nicolas Epain and Andrew Wabnitz   from CARLab,   the university of Sydney, Australia for providing  us the HOA and MCRoomSim toolbox.
\bibliographystyle{IEEEtran}

\end{document}